\providecommand{\keywords}[1]
{
	\small	
	\textbf{\textit{Keywords---}} #1
}
\DeclarePairedDelimiter\flor{\lfloor}{\rfloor}
\newcommand{\floor}[1]{\flor*{#1}}
\newcommand{\qquote}[1]{``#1''}
\newcommand{\verifyx}{\pcalgostyle{Verify}}
\newcommand{\ME}{\pcalgostyle{ME}}
\newcommand{\MME}{\pcalgostyle{MME}}
\newcommand{\LY}{\pcalgostyle{LYHW19}}
\newcommand{\pk}{\pcalgostyle{pk}}
\newcommand{\sk}{\pcalgostyle{sk}}
\newcommand{\Gpub}{G_{\text{pub}}}
\theoremstyle{plain}
\newtheorem{Th}{Theorem}
\newtheorem{Prop}[Th]{Proposition}
\theoremstyle{definition}
\title{A note on a Code-Based Signature Scheme}
\author{Giuseppe D'Alconzo\\ \href{mailto:giuseppe.dalconzo@polito.it}{giuseppe.dalconzo@polito.it} \\ Department of Mathematical Sciences, Politecnico di Torino}
\date{}
\begin{document}
	\maketitle
	
	\begin{abstract}
		In this work, we exploit a serious security flaw in a code-based signature scheme from a 2019 work by Liu, Yang, Han and Wang. They adapt the McEliece cryptosystem to obtain a new scheme and, on top of this, they design an efficient digital signature. We show that the new encryption scheme based on McEliece, even if it has longer public keys, is not more secure than the standard one. Moreover, the choice of parameters for the signature leads to a significant performance improvement, but it introduces a vulnerability in the protocol.
	\end{abstract}
	
	\keywords{Code-Based Signatures; CFS; McEliece}
	
	\section{Introduction}
	\label{Sect:intro}
	
	\paragraph{Post-Quantum Cryptography.}
	With the emerging menace of quantum computation, there is the urge of replacing cryptosystems used nowadays, based on the Factorisation Problem and the Discrete Logarithm Problem, with quantum-resistant alternatives. Because of this, in 2016 the National Institute of Technology and Security (NIST) started a call to evaluate and standardise quantum-resistant cryptosystems\footnote{NIST Post-Quantum Standardization process webpage: \url{https://csrc.nist.gov/Projects/post-quantum-cryptography/post-quantum-cryptography-standardization}, Accessed: 2022-06-16}. There are two categories of primitives under evaluation: Key-Encapsulation Mechanisms (KEM) and Digital Signatures. The standardisation process is still ongoing and counts many schemes based on different assumptions: lattices, linear codes and multivariate polynomials, among others.
	
	\paragraph{Code-Based Signatures.}
	For Key-Encapsulation Mechanism one of the finalists of the NIST's call is based on linear codes (Classic McEliece \cite{bernstein2017classic}), while for signatures the situation is very different: there are no code-based schemes among the finalists or alternative candidates. Previous trapdoor schemes are either not secure, for example KKS \cite{kabatianskii1997digital}, or unpractical, like CFS \cite{courtois2001achieve}. In particular, a big effort has been done to improve the performance of CFS-like schemes, leading to dangerous vulnerabilities, like the one exploited in this work or in \cite{d2021security}. However, new constructions are emerging \cite{debris2019wave,barenghi2021less} and an innovative current of research regarding Zero-Knowledge proofs and the \qquote{MPC-in-the-head} paradigm is very promising \cite{feneuil2021shared,gueron2022designing} and it seems be the new frontier of code-based signatures.
	
	\paragraph{Our contribution.}
	In this work, we cryptanalyze the two schemes presented in \cite{liu2019secure}: a public-key cryptosystem, a modification of McEliece, that we call Modified McEliece ($\MME$), and a digital CFS-like signature using the $\MME$ scheme denoted with $\LY$. We show that the new $\MME$ is not more secure than the original McEliece. In \cite{liu2019secure}, they claim that the signature $\LY$ is more efficient than the CFS scheme. We show that for proposed parameters, this scheme can be broken in practical time, while using bigger parameters to establish security, we have a fall in the performance.\\
	This work is organized as follows: after recalling some preliminaries in Section \ref{Sect:Prel}, in Section \ref{Sect:MME} we introduce the new public-key encryption scheme $\MME$ and we analyze its security with respect to the textbook McEliece cryptosystem. In Section \ref{Sect:SS} we recall the digital signature $\LY$ and we show how to recover, using the proposed parameters, the secret key from the public key.

\section{Preliminaries}
\label{Sect:Prel}

\paragraph{Notation.}Let $\mathbb{N}=\{1,2,\dots\}$ and $\mathbb{R}$ be the sets of natural and real numbers, respectively. We denote with $\lambda$ the security parameter. A function $\epsilon:\mathbb{N}\to\mathbb{R}$ is \emph{negligible} if there exists $n_0$ such that for every $n>n_0$ we have $\epsilon(n)\le {1}/{p(n)}$ for every polynomial $p$. A function not having this propriety is called \emph{non-negligible}. A probability is \emph{overwhelming} if it is equal to $1-\epsilon$, where $\epsilon$ is a negligible function. For a prime power $q$, $\mathbb{F}_q$ is the finite field with $q$ elements, and $\left(\mathbb{F}_q\right)^n$ is the $n$-dimensional vector space over $\mathbb{F}_q$. The \emph{Hamming weight} of a vector $x$ is the number of its non-zero coordinates, and its denoted with $\mathrm{w}(x)$. With $||$ we denote the concatenation of strings or vectors.

\paragraph{Linear Codes and Goppa Codes.}A $\left[n,k,d\right]$ \emph{linear code} $\mathcal{C}$ over $\mathbb{F}_q$ is a $k$-dimensional vector subspace of $\left(\mathbb{F}_q\right)^n$ such that $d=\min_{x\in\mathcal{C}\setminus\{0\}}\{w(x)\}$. Parameters $n,k$ and $d$ are respectively called the \emph{length}, the \emph{dimension} and the \emph{minimum distance} of the code $\mathcal{C}$. Given a basis $\mathcal{B}$ of $\mathcal{C}$, a \emph{generator matrix} for $\mathcal{C}$ is a $k\times n$ matrix with coefficients in $\mathbb{F}_q$ having elements of $\mathcal{B}$ as rows.
The \emph{error correction capability} of a code is given by $t=\floor{\frac{d-1}{2}}$. Given a vector $c$ in $\mathcal{C}$ and a vector $e$ in $\left(\mathbb{F}_q\right)^n$ of weight at most $t$, the \emph{decoding} of a given $y=c+e$ is the procedure of recovering $c$ and $e$. For the rest of this work, we use the convention that given a decoding algorithm $\textbf{D}_{\mathcal{C}}$ and a vector $y=c+e$, the decoding of $y$ is given by $c$ in $\mathcal{C}$. For a random code, the decoding is an hard problem \cite{berlekamp1978inherent}, but there exist families of codes having efficient decoding algorithms. 

For cryptographic constructions, a relevant family of efficiently decodable linear codes is given by \emph{binary irreducible Goppa codes} \cite{berlekamp1973goppa}. A binary irreducible Goppa code is a code over $\mathbb{F}_{2^m}$ defined by a monic irreducible polynomial $g(X)$ in $\mathbb{F}_{2^m}[X]$ of degree $t$ and by an ordered set $L=\{\alpha_1,\dots,\alpha_n\}$ in $\mathbb{F}_{2^m}$ called \emph{support}. Usually, we set $n=2^m$, i.e. the support is the whole field $\mathbb{F}_{2^m}$. The Goppa code $\Gamma(g,L)$ is a $[n,n-mt,d\ge 2t+1]$ linear code over $\mathbb{F}_{2^m}$. Given the polynomial $g(X)$ and the support $L$, there is a polynomial time decoding algorithm $D_{\Gamma(g,L)}$. For this work we do not need more details on the construction of $\Gamma(g,L)$ or on the decoding algorithm $D_{\Gamma(g,L)}$; the interested reader can see \cite{berlekamp1973goppa,patterson1975algebraic,macwilliams1977theory}.

\paragraph{Public-Key Encryption Schemes.}
A \emph{public-key encryption scheme} is a tuple of polynomial-time algorithms $\left(\kgen,\enc,\dec\right)$: $\kgen$ takes in input a security parameter $\lambda$ in unary and returns a pair of public-secret keys, while $\enc$ and $\dec$ are the encryption and decryption algorithms. We want that for every $\lambda$, given $\left(\pk,\sk\right)=\kgen(1^\lambda)$, the following holds for every suitable plaintext $m$
$$ \dec(\pk,\sk,\enc(\pk,m)) = m $$
with overwhelming probability.

A \emph{key-recovery forger} $\mathcal{F}$ for a public-key encryption scheme is a probabilistic polynomial-time algorithm that, on input the public key $\pk$, returns the secret key $\sk$ with non-negligible probability.

Many code-based public key encryption schemes can be found in literature: the most notable are the McEliece (Section \ref{Sect:MME}) \cite{mceliece1978public} and the Niederreiter \cite{niederreiter1986knapsack} cryptosystems.

\paragraph{Digital Signatures and CFS.}
A digital signature algorithm is a scheme composed by three algorithms $\left(\kgen,\sig,\verifyx\right)$: $\kgen$ takes in input a security parameter $\lambda$ in unary and returns a pair of public-private keys, while $\sig$ and $\verifyx$ are the signature and verify algorithms. If $\left(\pk,\sk\right)=\kgen(1^\lambda)$ and $\sigma=\sig(\sk,m)$ , then $\verifyx(\pk,\sigma,m)$ accepts the signature $\sigma$ with overwhelming probability for every message $m$ and security parameter $\lambda$.

We adapt the key-recovery forger in the case of digital signatures. A \emph{key-recovery forger} $\mathcal{F}$ for a signature scheme is a probabilistic polynomial-time algorithm that, on input the public key $\pk$, returns the secret key $\sk$ with non-negligible probability.

The CFS scheme \cite{courtois2001achieve} is a code-based digital signature based on the \emph{hash-and-sign} paradigm. The public key is a randomly scrambled binary irreducible Goppa code $\mathcal{C}$, while the secret key is the decoding algorithm $D_{\mathcal{C}}$. This scheme is highly unpractical: in the signing algorithm, we compute the hash $h(m,i)$, where $m$ is the message and $i$ a nonce, until the digest is a decodable vector for $\mathcal{C}$. This operation requires a huge number of hashes and decodings. The construction presented in \cite{liu2019secure} is based on CFS and a modified version of McEliece, to decrease the computational effort of the signing algorithm.

\section{Modified McEliece Cryptosystem}
\label{Sect:MME}

In this section we recall the textbook definition of the McEliece public-key cryptosystem, showing the modified version from \cite{liu2019secure} and proving that, even if public keys are longer, this new version does not improve the security of the scheme.

\subsection{McEliece and the Modified Version}
\label{SubS:ME}
The following is the textbook version of McEliece \cite{mceliece1978public}, the starting point of the modification given in \cite{liu2019secure}. Alternative versions of this scheme use different techniques to achieve higher levels of security and/or more compact keys. The public-key encryption cryptosystem McEliece ($\ME$) is composed of the following three algorithms.
\begin{itemize}
	\item $\kgen_{\ME}(1^\lambda)$: generate a Goppa code $\mathcal{C}$ over $\mathbb{F}_{2^m}$ with parameters $[n=2^m,k,2t+1]$ according to $\lambda$. Let $G$ be a generator matrix of $\mathcal{C}$ and $\textbf{D}_{\mathcal{C}}$ an efficient decoding algorithm. Sample two random matrices with coefficients in $\mathbb{F}_{2^m}$: a $k\times k$ invertible matrix $S$ and a $n\times n$ permutation matrix $P$. Set $\Gpub=SGP$. Return $(\Gpub,t)$ as public key and $(S,P,\textbf{D}_{\mathcal{C}})$ as secret key.
	
	\item $\enc_{\ME}(m,(\Gpub,t))$: the ciphertext of a message $m$ in $\mathbb{F}_2^{k}$ is given by $m\Gpub+e$, where $e$ is a randomly chosen vector in $\mathbb{F}_{2^m}^{n}$ of weight $t$.
	
	\item $\dec_{\ME}(c,(S,G,P,\textbf{D}_{\mathcal{C}}))$: given the ciphertext $c$, compute $cP^{-1}$ and apply the decoding algorithm $\textbf{D}_{\mathcal{C}}$, obtaining the vector $x=mS$. The plaintext is given by $xS^{-1}=m$.
\end{itemize}

We recall that this scheme is not IND-CPA secure \cite{nojima2008semantic}, for stronger variants achieving this security level and others see \cite{dottling2012cca2,persichetti2018cca2}. Moreover, the state of the art uses matrices in systematic form as public key to reduce their size, an example can be Classic McEliece \cite{bernstein2017classic} (even if it is based on Niederreiter \cite{niederreiter1986knapsack}, a cryptosystem equivalent to McEliece).

We recall the modified version of the cryptosystem above from \cite{liu2019secure}. We refer to this scheme as Modified McEliece ($\MME$).
\begin{itemize}
	\item $\kgen_{\MME}(1^\lambda)$: generate a Goppa code $\mathcal{C}$ over $\mathbb{F}_{2^m}$ with parameters $[n=2^m,k,2t+1]$ according to $\lambda$. Let $G$ be a generator matrix of $\mathcal{C}$ and $\textbf{D}_{\mathcal{C}}$ an efficient decoding algorithm. Sample two random $k\times k$ invertible matrices $A$ and $B$, and a $n\times n$ permutation matrix $P$, all with coefficients in $\mathbb{F}_2$. Set $G'=AGP$ and $G''=BGP$. Moreover, set
	$$ \rho = (A+B)^{-1}B, \quad \gamma=\left[A+B\left(A+B\right)^{-1}B\right]^{-1}; $$
	if such matrices are non-invertible, pick different $A$ and $B$.
	Return $(G',G'',t)$ as public key and $(P,\rho,\gamma,\textbf{D}_{\mathcal{C}})$ as secret key.
	
	\item $\enc_{\MME}(m,(G',G'',t))$: given a message $m$ in $\mathbb{F}_2^{k}$, split it as $m=m_1+m_2$, where $m_1$ is random. The ciphertext of $m$ is given by $(c_1,c_2)$ where
	$$ c_1=m_1G'+m_2G''+e_1,\qquad c_2=mG'+m_1G''+e_2$$
	with $e_1,e_2$ random elements of $\mathbb{F}_2^n$ of weight $t$.
	
	\item $\dec_{\MME}(c,(P,\rho,\gamma,\textbf{D}_{\mathcal{C}}))$: given the ciphertext $c=(c_1,c_2)$, compute $c_1P^{-1}$ and $c_2P{-1}$. Decode these vectors with $\textbf{D}_{\mathcal{C}}$ and obtain $x_1,x_2$. The plaintext can be reconstructed by
	$$ m = (x_2+x_1\rho)\gamma.$$
\end{itemize}
	For an analysis of the correctness of this scheme, we remand to the original work \cite{liu2019secure}.
	
	\subsection{Security Analysis}
	
	The main point of the modification of $\MME$ is the use of two different public matrices $G'$ and $G''$, as well as the randomization in the encryption procedure, splitting the message into two random shares $m=m_1+m_2$.
	
	In the following result, we reduce the security of $\MME$ to the security of $\ME$.
	\begin{Prop}
		Let $\mathcal{F}_{\ME}$ be a key-recovery forger able to retrieve the secret key for the scheme $\ME$, then it is possible to design a key-recovery forger $\mathcal{F}_{\MME}$ for $\MME$ that uses $\mathcal{F}_{\ME}$ as a subroutine.
	\end{Prop}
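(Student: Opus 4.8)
The plan is to exploit the fact that each of the two public matrices of $\MME$ is, by itself, a textbook McEliece public key and that the pair $(G',G'')$ shares the same permutation and the same underlying code, so that a \emph{single} call to $\mathcal F_{\ME}$ will suffice. Indeed $G'=AGP$ with $A$ a $k\times k$ invertible matrix and $P$ an $n\times n$ permutation, so, viewing $A$ as a matrix over $\mathbb F_{2^m}$, the pair $(G',t)$ is a legitimate $\ME$ public key (scrambler $S:=A$, Goppa code $\mathcal C$, permutation $P$); and $G''=BGP=(BA^{-1})G'$ has the same row space as $G'$. I would have $\mathcal F_{\MME}$ run $\mathcal F_{\ME}$ on $(G',t)$ to obtain \emph{some} valid McEliece secret key for $G'$, and then reconstruct, using only linear algebra, a tuple $(\widetilde P,\widetilde\rho,\widetilde\gamma,\mathbf D_{\widetilde{\mathcal C}})$ that decrypts $\MME$-ciphertexts under $(G',G'',t)$ — this tuple need not equal the genuine secret key, only be functionally equivalent to it.

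On input $(G',G'',t)$ the forger $\mathcal F_{\MME}$ would proceed as follows.
\begin{enumerate}
	\item Call $\mathcal F_{\ME}(G',t)$; with non-negligible probability it returns a valid McEliece secret key for $G'$, from which we read off the permutation $\widetilde P$, a generator matrix $\widetilde G$ and the decoder $\mathbf D_{\widetilde{\mathcal C}}$ of the associated Goppa code $\widetilde{\mathcal C}$, and the invertible scrambler $\widetilde S$ with $G'\widetilde P^{-1}=\widetilde S\,\widetilde G$.
	\item Solve the linear system $\widetilde B\,\widetilde G=G''\widetilde P^{-1}$ for the $k\times k$ matrix $\widetilde B$; since $\widetilde G$ has full row rank and $G''\widetilde P^{-1}=(BA^{-1})\widetilde S\,\widetilde G$ lies in the row space of $\widetilde G$, there is a unique solution, and $\widetilde B=(BA^{-1})\widetilde S$ is invertible.
	\item Set $\widetilde\rho=(\widetilde S+\widetilde B)^{-1}\widetilde B$ and $\widetilde\gamma=\bigl[\widetilde S+\widetilde B(\widetilde S+\widetilde B)^{-1}\widetilde B\bigr]^{-1}$; these are well defined because substituting $\widetilde B=(BA^{-1})\widetilde S$ yields $\widetilde S+\widetilde B=(A+B)A^{-1}\widetilde S$ and $\widetilde S+\widetilde B(\widetilde S+\widetilde B)^{-1}\widetilde B=\bigl(A+B(A+B)^{-1}B\bigr)A^{-1}\widetilde S$, and $A+B$ and $A+B(A+B)^{-1}B$ are invertible since $\rho$ and $\gamma$ belong to the genuine $\MME$ secret key.
	\item Output $(\widetilde P,\widetilde\rho,\widetilde\gamma,\mathbf D_{\widetilde{\mathcal C}})$.
\end{enumerate}

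To argue correctness I would observe that $(\widetilde S,\widetilde B,\widetilde P,\widetilde G,\mathbf D_{\widetilde{\mathcal C}})$ is a valid configuration of $\kgen_{\MME}$ whose public key is exactly $(G',G'',t)$, because $\widetilde S\,\widetilde G\,\widetilde P=G'$ and $\widetilde B\,\widetilde G\,\widetilde P=G''$; the only formal deviation is that $\widetilde S,\widetilde B$ may have entries outside $\mathbb F_2$, but the correctness analysis of $\MME$ from \cite{liu2019secure} applies verbatim to arbitrary invertible scramblers over $\mathbb F_{2^m}$ — it ultimately rests on an identity valid in any field of characteristic $2$. Hence $\dec_{\MME}$ run with $(\widetilde P,\widetilde\rho,\widetilde\gamma,\mathbf D_{\widetilde{\mathcal C}})$ inverts $\enc_{\MME}(\cdot,(G',G'',t))$ with overwhelming probability, so the output is a functionally valid secret key. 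Since $\mathcal F_{\MME}$ makes one call to $\mathcal F_{\ME}$ and otherwise performs only polynomial-time linear algebra, it runs in probabilistic polynomial time and succeeds whenever $\mathcal F_{\ME}$ does, i.e.\ with non-negligible probability.

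The main obstacle is distributional rather than algebraic. A key-recovery forger is only guaranteed to succeed with non-negligible probability over the \emph{McEliece} key distribution, whereas $G'$ is drawn from the $\MME$ distribution: its scrambler $A$ is uniform over $\mathbb F_2$ instead of $\mathbb F_{2^m}$, and it is conditioned on the invertibility of $\rho$ and $\gamma$. One must therefore argue that handing such a $G'$ to $\mathcal F_{\ME}$ does not destroy its advantage — for instance by taking $\mathcal F_{\ME}$ to be a worst-case key-recovery adversary that succeeds for every valid McEliece public key, by noting that the conditioning events are efficiently decidable and leave a non-negligible fraction of keys, or by invoking the standard assumption that a McEliece $\Gpub$ is indistinguishable from a suitably random full-rank matrix. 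I expect this reconciliation of the two key distributions to be the only genuine subtlety; everything else is routine linear algebra together with the correctness of $\MME$.
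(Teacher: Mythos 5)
Your proposal is correct and follows essentially the same route as the paper: a single call to $\mathcal{F}_{\ME}$ on $(G',t)$, followed by solving the linear relation between $G''$ and $G'$ (the paper's matrix $\Sigma$ with $G''=\Sigma G'$, your $\widetilde B$) to reconstruct $\rho$ and $\gamma$ by matrix arithmetic. You are in fact more careful than the paper on two points it glosses over — that $\mathcal{F}_{\ME}$ may return a functionally equivalent rather than the genuine decomposition, and that the $\MME$ key distribution (scramblers over $\mathbb F_2$, conditioned on invertibility of $\rho$ and $\gamma$) does not exactly match the $\ME$ one — but the underlying argument is the same.
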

	\begin{proof}
		Let $(P,\rho,\gamma,\textbf{D}_{\mathcal{C}})$ be the secret key of the public key $(G',G'',t)$ of the scheme $\MME$. By hypothesis let $\mathcal{F}_{\ME}$ be a forger for $\ME$: given a public key for $\ME$, it returns, with non-negligible probability, the corresponding secret key. We design the forger $\mathcal{F}_{\MME}$ as follows:
		\begin{enumerate}
			\item on input $(G',G'',t)$, observe that $G'=AGP$ and $G''=BGP$ are generator matrices of the same code, a permutation of the code generated by $G$. There exists a non-invertible matrix $\Sigma$ such that
			$$ G''=\Sigma G'.$$
			This implies $B=\Sigma A$.
			\item Use the forger $\mathcal{F}_{\ME}$ on input $(G',t)$, in fact this is a valid public key for $\ME$. The forger returns the secret key $(S,P,\textbf{D}_{\mathcal{C}})$ with $A=S$. Now we can compute $B$ and the rest of the $\MME$ secret key.
		\end{enumerate}
		Observe that the computation of $\Sigma$ involves only a Gauss elimination and therefore it can be carried in polynomial time. The computation of the $\MME$ secret key $(P,\rho,\gamma,\textbf{D}_{\mathcal{C}})$, given the secret key $(S,P,\textbf{D}_{\mathcal{C}})$ for $\ME$, involves only matrices operations. This implies that the key-recovery forger $\mathcal{F}_{\MME}$ is a polynomial-time algorithm with success probability equal to success probability of $\mathcal{F}_{\ME}$ and hence, non-negligible.
	\end{proof}
	
	The previous proposition shows how the security of the secret key of $\MME$ is at least as weak as the one of $\ME$. This is not a problem itself, but, if we take into account that the size of the public key of $\MME$ is the double of the one of $\ME$, and the fact that the major drawback of $\ME$ is the size of the public key, the adoption of $\MME$ does not solve this problem.
	
\section{The Signature Scheme}
\label{Sect:SS}
	In this section, we report and analyze the CFS-like signature scheme presented in \cite{liu2019secure}. The protocol, that we call $\LY$ is the following:
	\begin{itemize}
		\item $\kgen_{\LY}(1^\lambda)$: generate a Goppa code $\mathcal{C}$ over $\mathbb{F}_{2^m}$ with parameters $[n=2^m,k,2t+1]$ according to $\lambda$. Let $G$ be a generator matrix of $\mathcal{C}$ and $\textbf{D}_{\mathcal{C}}$ an efficient decoding algorithm. Sample two random $k\times k$ invertible matrices $A$ and $B$, and a $n\times n$ permutation matrix $P$, all with coefficients in $\mathbb{F}_2$. If the matrix $\left(A+B+AB^{-1}A\right)$ is singular, pick different $A$ and $B$. Set $G'=AGP$ and $G''=BGP$. Choose two hash functions $h_1,h_2:\{0,1\}^*\to \{0,1\}^n$.
		Return $(G',G'',t,h_1,h_2)$ as public key and $(A,B,P,\textbf{D}_{\mathcal{C}})$ as secret key.
		
		\item $\sig_{\LY}(M,(A,B,P,\textbf{D}_{\mathcal{C}}))$: given the message $M$, compute $d=h_1(M)$ and find two nonces $i_1,i_2$ in $\mathbb{N}$ such that $y_1=h_1(d||i_1)P^{-1}$ and $y_2=h_2(d||i_2)P^{-1}$ are two decodable vectors using $\textbf{D}_\mathcal{C}$. From $y_i$ we obtain messages $x_i=\textbf{D}_{\mathcal{C}}(y_i)$, for $i=1,2$. Set $e_1=y_1+x_1$ and $e_2=y_2+x_2$.
		Now set
		\begin{align*}
			m_1 &= \left(x_1+x_0 B^{-1}A\right) \left(A+B+AB^{-1}A\right)^{-1} \\
			m_2 &=x_0 B^{-1} + \left(x_1+x_0B^{-1}A\right) \left(A+B+AB^{-1}A\right)AB^{-1}.
		\end{align*}
		%In this way, we have that $x_1=m_1G'+m_2G''$ and $x_2=(m_1+m_2)G'+m_1G''$.
		Return the tuple $(i_1,i_2,m_1,m_2,e_1,e_2)$ as a signature for $M$ .
		
		\item $\verifyx_{\LY}(M,\sigma,(G',G'',t,h_1,h_2))$: given the signature $\sigma$ for the message $M$, parse $\sigma$ as $(i_1,i_2,m_1,m_2,e_1,e_2)$ and check that
		\begin{enumerate}
			\item $m_1G' + m_2G'' +e_1 = h_1(h_1(M)||i_1)$ and
			\item $mG' + m_1G'' +e_2 = h_2(h_1(M)||i_2)$.
		\end{enumerate}
		The signature is valid if both checks pass, otherwise reject.
	\end{itemize}

	This scheme uses the CFS construction \cite{courtois2001achieve}, using brute-force hashing to find a decodable digest. While CFS does this process only once, $\LY$ searches two decodable digests. The probability of finding a single such digest is $\sim \frac{1}{t!}$ and then, the number of hashes and decodings in $\sig_{\LY}$ is roughly $2t!$.
	
	The authors of $\LY$, in their work \cite{liu2019secure}, propose some small parameters to improve the signing time of the scheme. Since they assume that the security of the signature scheme is based on the security of the $\MME$ encryption scheme, propose to set $t$ equal to 1 or 2. In this way the number of decoding attempts is very low, instead of $10!$ as proposed in the original CFS signature \cite{courtois2001achieve}. Unfortunately, this approach leads to some security issues, as shown in \cite{loidreau2001weak}. Indeed, for a small $t$, there exists a practical key-recovery attack enumerating all the irreducible Goppa codes and checking the permutation equivalence via the Support Splitting Algorithm \cite{sendrier2000finding}. This algorithms checks and returns, if any, the permutation between two codes. For a larger class of codes, and most notably on random codes, the Support Splitting Algorithm runs in polynomial time. This is the case of Goppa codes \cite{loidreau2001weak}. We reassume this reasoning in the following result.
	
	\begin{Prop}
		The signature scheme $\LY$, with parameters proposed in \cite{liu2019secure}, i.e. $m=16$ and $t=1,2$, admits a key-recovery forger.
	\end{Prop}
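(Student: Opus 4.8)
The plan is to turn the well-known weakness of CFS-type keys of small degree \cite{loidreau2001weak} into a concrete key-recovery algorithm against $\LY$. The starting observation is that the public key $(G',G'',t,h_1,h_2)$ reveals, through $G'$ (and equally through $G''$, since $AGP$ and $BGP$ span the same code), a generator matrix of the code $\mathcal{C}P$, which is permutation-equivalent to the secret Goppa code $\mathcal{C}=\Gamma(g,L)$. For $t\in\{1,2\}$ the number of monic irreducible polynomials of degree $t$ over $\mathbb{F}_{2^m}$ is only about $2^{mt}/t$ (exactly $2^{m}$ when $t=1$, and roughly $2^{2m-1}$ when $t=2$), so with $m=16$ this list is short enough to be traversed exhaustively. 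I would therefore loop over every such polynomial $g$, form the candidate Goppa code $\Gamma(g,L)$ with the prescribed support $L\subseteq\mathbb{F}_{2^m}$, and run the Support Splitting Algorithm of \cite{sendrier2000finding} on the pair consisting of $\Gamma(g,L)$ and the code generated by $G'$.

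By \cite{loidreau2001weak} the Support Splitting Algorithm runs in polynomial time on binary irreducible Goppa codes and outputs, when the two input codes are permutation-equivalent, a permutation $\pi$ realizing the equivalence; for the $g$ actually used in $\kgen_{\LY}$ it returns such a $\pi$, and for (almost all) other choices of $g$ it reports inequivalence. Once a matching $g$ and a permutation matrix $\pi$ with $\Gamma(g,L)\,\pi$ equal to the code generated by $G'$ are found, I recover the secret key as follows: $\textbf{D}_{\mathcal{C}}$ is the known decoder of $\Gamma(g,L)$; $P:=\pi$; and, fixing a generator matrix $G$ of $\Gamma(g,L)$, the matrices $A$ and $B$ are the unique solutions of the linear systems $G'=A\,(GP)$ and $G''=B\,(GP)$, each obtained by a single Gaussian elimination because $GP$, $G'$ and $G''$ all generate the same code. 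The resulting tuple $(A,B,P,\textbf{D}_{\mathcal{C}})$ is a valid secret key for $(G',G'',t,h_1,h_2)$: if one worries about the non-singularity requirement on $A+B+AB^{-1}A$, observe that any ambiguity in $\pi$ is by an automorphism of $\mathcal{C}$ and does not affect the functional behaviour, so the genuine secret key — which by construction satisfies the requirement — is among the tuples produced. Since enumeration, the polynomial-time SSA calls, and the final linear algebra are all feasible for $m=16$, $t\le 2$, this constitutes a practical key-recovery attack, i.e. a key-recovery forger in the sense of Section \ref{Sect:Prel} for the proposed instantiation, with overwhelming success probability.

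The only delicate point, and the step I expect to be the real obstacle, is the behaviour of the Support Splitting Algorithm: SSA can slow down dramatically or fail to distinguish coordinates when the hull of the code has a large automorphism group, so the argument genuinely relies on the structural fact, established in \cite{loidreau2001weak}, that binary irreducible Goppa codes of low degree have (essentially) trivial hull and permutation group, which is what makes SSA both efficient here and able to pin down $\pi$. Everything else — bounding the size of the polynomial search space, and the linear algebra recovering $A$, $B$ and $P$ — is routine. It would then remain only to note that raising $t$ to thwart this attack (say $t\approx 10$, as in \cite{courtois2001achieve}) reintroduces the $\sim t!$ cost of finding decodable digests in $\sig_{\LY}$, so the efficiency advantage claimed in \cite{liu2019secure} cannot coexist with security.
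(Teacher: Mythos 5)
Your proposal is correct and follows essentially the same route as the paper's proof: enumerate the roughly $2^{mt}/t$ irreducible Goppa codes (feasible for $m=16$, $t\le 2$) and test permutation equivalence with the code generated by $G'$ via the Support Splitting Algorithm, as in \cite{loidreau2001weak}. You in fact supply more detail than the paper does — in particular the final linear-algebra step recovering $A$, $B$, $P$ from the matched code, and the caveat about SSA's dependence on the hull being small — but the underlying attack is identical.
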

	\begin{proof}
		A key-recovery forger can be defined as follows. Given the public key $(G',G'',t)$, using the Support Splitting Algorithm and enumerating all $2^{mt}/t$ irreducible Goppa codes we can find a permutation equivalent code to the one used in the secret key. The number of codes to check can be decreased up to $2^{m(t-3)}/mt$ using techniques from \cite{loidreau2001weak}. The Support Splitting Algorithm has average complexity $O(n^3)$ and, for small $t$, this leads to a practical attack to the secret key of the scheme: finding a permutation equivalent Goppa code allows to recover the secret key.
	\end{proof}

	We want to highlight that using a bigger $t$ could ensure the security of the scheme but does not solve the inefficiency problem of CFS. Moreover, in the case of $\LY$, the number of decoding attempts is twice of what is done in CFS, resulting in an even slower signing algorithm. This does not justify the adoption of $\LY$ over CFS.
	
	\section{Conclusions}
	\label{Sect:concl}
	The public key cryptosystem $\MME$ is not weaker than the standard scheme $\ME$, but the adoption of a longer public key does not bring any additional security. The digital signature $\LY$ has the same issue as CFS: for a reasonable level of security, it is unpractical. The field of code-based signature schemes is very prosperous, and the research is still looking for other solutions which maintain the original security of CFS but also decrease the computational complexity of the signing algorithm.

	\section*{Acknowledgments}
	The author is a member of the INdAM Research group GNSAGA and acknowledges support from TIM S.p.A. through the PhD scholarship.

\bibliographystyle{spmpsci}
\bibliography{local}
\end{document}